\newtheorem*{theorem}{Theorem}
\newenvironment{claim}[1]{\par\noindent\textbf{Claim.}\space#1}{}
\definecolor{myblue}{RGB}{80,80,160}
\definecolor{mygreen}{RGB}{80,160,80}
\begin{document}

\title{AUPCR Maximizing Matchings : Towards a Pragmatic Notion of Optimality for One-Sided Preference Matchings} 
\author{ Girish Raguvir J\thanks{All authors contributed equally},  Rahul Ramesh\footnotemark[1],  Sachin Sridhar\footnotemark[1], Vignesh Manoharan\footnotemark[1]\\
Department of Computer Science and Engineering\\
Indian Institute of Technology Madras, India
}
\maketitle
\begin{abstract}
We consider the problem of computing a matching in a bipartite graph in the presence of one-sided preferences.
There are several well studied notions of optimality which include pareto optimality, rank maximality, fairness and popularity. In this paper, we conduct an in-depth experimental study comparing different notions of optimality based on a variety of metrics like cardinality, number of rank-1 edges, popularity, to name a few. Observing certain shortcomings in the standard notions of optimality, we propose an algorithm which maximizes an alternative metric called the \textit{Area under Profile Curve ratio} (AUPCR). To the best of our knowledge, the AUPCR metric was used earlier but there is no known algorithm to compute an AUPCR maximizing matching. Finally, we illustrate the superiority of the AUPCR-maximizing matching by comparing its performance against other optimal matchings on synthetic instances modeling real-world data.
\end{abstract}
\section{Introduction}
The problem of assigning elements of one set to elements of another set is motivated by important real-world scenarios like assigning students to universities, applicants to jobs and so on. In many of these applications, members of one or both the sets rank each other in an order of preference. The goal is to compute an assignment that is ``optimal" with respect to the preferences.

In this paper we focus on the {\em one-sided} preference list model where members of one set rank  a subset of  elements in the other set in a linear order (that is, preferences are assumed to be strict). Several notions of optimality like pareto-optimality, rank-maximality, fairness and popularity have been considered in literature (We give formal definitions of each of these notions later). For each of the above mentioned notions of optimality, there are efficient algorithms studied in the literature to compute the specified optimal matching. \cite{pareto} describe an algorithm that computes a maximum cardinality pareto-optimal matching. \cite{pop-main} present an algorithm to compute a popular matching while \cite{irving2004rank,fair} propose algorithms that optimize the head/tail of the matching profile (rank-maximal and fair respectively). Maximizing one metric could however result in poor performance on other yardsticks of measure. When comparing two matchings, it is difficult to measure the quality of the two matchings using a single scalar value. They can be compared using a variety of metrics like cardinality, number of matched Rank-1 edges or cardinality, none of which can serve as a sole indicator of optimality.

Profile based matchings, like Rank-maximal or Fair matchings which optimize for the head or the tail of the profile can turn out to be biased under certain circumstances. An alternative is to consider the \textit{Area under Profile Curve Ratio} metric introduced in \cite{experiment2}. This metric aims to maximize a measure, that is a weighted sum of matched edges, with the weight proportional to its position in the preference list

In this work, we first present a comprehensive experimental study of the well-studied notions of optimality and compare them using different measures of matching quality. We then describe the AUPCR metric, and propose algorithms to compute an AUPCR maximizing matching, and a maximum cardinality AUPCR maximizing matching.

Finally, we empirically evaluate different matching algorithms on synthetic graphs generated from generator models specified by \cite{experimental} using various metrics. The generated graphs fall into two categories, one having uniformly random preference lists and the other having highly correlated preference lists. Our analysis is inspired by the analysis of \cite{experimental}, and we additionally consider a ranking system in which the matchings are ranked based on multiple metrics. These rankings are consequently aggregated to obtain a single rank, which we use as a coarse indicator of matching quality.

The AUPCR maximizing matching is experimentally shown to have good performance across evaluated metrics on the considered data-sets, and we believe this matching is well suited for practical applications.

\section{Preliminaries}
Consider a set $\mathcal{A}$ of applicants and a set $\mathcal{P}$ of posts.
Every applicant $a$ has preference list over a subset of the posts in $\mathcal{P}$.
This list is a linear order (strict list) and is called the preference list of $a$ over $\mathcal{P}$.
The problem is readily represented as a bipartite graph with vertices $\mathcal{V} = \mathcal{A} \cup \mathcal{P}$ and
an edge $(a, p)$ is present if $p$ is acceptable to $a$.
Preferences of applicants are encoded by assigning ranks to edges.
Each edge $(a,p)$ has a rank $i$ if $a$ considers $p$ as its $i$-th most preferred post.
A matching $M \subseteq E$ is a collection edges such that no two edges share an endpoint.
Let $|A \cup \mathcal{P}| = n$ and $|E| = m$.
We now define formally the different notions of optimality.


\subsection{Maximum Cardinality Pareto Optimal Matching}
A matching $M$ is said to be Pareto-optimal if there is no other matching $M'$ such that some applicant is better off in $M'$ while no applicant is worse off in $M'$ than in $M$(an applicant is worse of in $M$ if it is matched to an less preferred vertex compared to $M'$) . Maximum cardinality Pareto optimal matchings(POM) can be computed in $\mathcal{O}(m\sqrt{n})$ time using the algorithm given by \cite{pareto}.

\subsection{Rank Maximal Matching}
The notion of \textit{rank-maximal} matchings was first introduced by
Irving under the name of \textit{greedy matchings} \cite{irving2003greedy}.
A rank-maximal matching is a matching in which the number of rank one edges is maximized, subject to which the number of rank two edges is maximized and so on.
Another way of defining rank-maximal matchings is through their $signatures$.
Given that $r$ is the largest rank given to a choice across all preference lists, we define the signature of a matching $M$ as an
 r-tuple $(x_{1},x_{2},\cdots ,x_{r}, x_{r+1})$ where, for $1 \leq i \leq r$, $x_{i}$ represents the number of applicants matched
to one of their $i$-th preferences ($x_{r+1}$ denotes the number of unmatched applicants). Let $(x_1, x_2,\cdots, x_r, x_{r+1})$ and $(y_1, y_2,\cdots, y_r, y_{r+1})$ denote
the signatures of $M$ and $M'$ respectively. We say that $M \succ$ $M'$ w.r.t. rank-maximality if there exists an index $k$ such that
 $1 \le k \le r$ and
for $1 \le i \le k$, $x_i = y_i$ and $x_k > y_k$. A matching $M$ is rank-maximal if $M$ has the best signature
w.r.t. rank-maximality. 
Through the rest of the paper, we denote this matching as RMM.
For the purposes of our experimental evaluation, we implement a simple combinatorial algorithm \cite{irving2004rank} to compute
a rank maximal matching. The running time of the algorithm is  $O(min(C\sqrt{n}, n+C)\cdot m)$. 
\subsection{Popular Matching}
To define popularity, we translate preferences of applicants over posts to preferences of applicants over matchings.
An applicant $a$ prefers matching $M$ to $M'$ if either $a$ is matched in $M$ and unmatched in $M'$, or $a$ is matched in both $M$ and $M'$ but has better rank in $M$ than in $M'$. A matching $M$ is more popular than $M'$ if the number of applicants
who prefer $M$ to $M'$ is more than the number of those who prefer $M'$ to $M$. A matching $M$ is popular if there is no matching that is more popular than $M$.
A linear time  algorithm to compute a maximum cardinality popular matching for strict preferences is given in \cite{pop-main}.
The more popular than relation is not transitive, and hence it is possible that a popular matching does not exist.  When a popular matching does not exist, one can attempt to obtain the  least unpopular matching. We consider the unpopularity factor given in \cite{unpop}.
An algorithm given in \cite{pop-bounded} finds a popular matching if it exists. 
Through the rest of the paper, we denote this matching as POPM.

\subsection{Fair Matching}
Fair matchings can be considered as complementary to rank-maximal matchings. A fair matching is always a maximum cardinality matching, subject to this, it matches the least number of applicants their last preferred post, subject to this, least number of applicants to their second last
preferred post and so on. Fair matchings can be conveniently defined using signatures.
Let $(x_1, x_2,\cdots, x_r, x_{r+1})$ and $(y_1, y_2,\cdots, y_r, y_{r+1})$ denote
the signatures of two matchings  $M$ and $M'$ respectively. We say that $M \succ M'$ w.r.t. fairness
if there exists a index $k$, such that $1 \le k \le r+1$ and
for $k < i \le r+1$, $x_i = y_i$, and $x_k < y_k$.
A matching is fair if it is of maximum cardinality, and subject to that it has the best signature according to the above defined criteria.
Recently  \cite{fair} gave a combinatorial algorithm to compute fair matchings.
Through the rest of the paper, we denote this matching as FM.

\section{AUPCR Maximizing Matching}
Fair and Rank-maximal matchings are profile based matchings that are geared towards minimizing the tail or maximizing the head of the profile. However, optimizing for the peripheral portions of a profile may not be necessarily representative of a \textit{good} matching in many practical settings. This encouraged us to look into a metric called \textit{Area Under Profile Curve Ratio} (AUPCR) which, in a sense, seemed to capture the entire signature of a matching.
\subsection{Formulation of AUPCR}
The \textit{Area Under Profile Curve Ratio} (AUPCR), introduced under the context of matchings by \cite{experiment2} is a measure of second order stochastic dominance of the profile. It is a useful metric that can be used to compare multiple signatures and is very similar in nature to the highly popular Area Under Curve of Receiver Operating Characteristic \cite{roc}.

For a matching $M$ of a bipartite graph $G(A\cup \mathcal{P}, E)$ with $n_i(M)$ representing the number of applicants matched to their $i$'th preference, AUPCR(M) is defined as the ratio of \textit{Area Under Profile Curve} (AUPC) and \textit{Total Area} (TA) where
\begin{equation} 
\text{TA}(M) = {|A||\mathcal{P}|} 
\end{equation}
\begin{equation*} 
\text{AUPC}(M) = {\sum_{r=1}^{|\mathcal{P}|}|(a_{i},p_{j}) \in M:rank(a_{i},p_{j})\leq r|} 
\end{equation*}
\begin{equation} \label{defAupc}
= {\sum_{i=1}^{|\mathcal{P}|} (|\mathcal{P}|-i+1) n_i(M)}
\end{equation}
Giving us,
\begin{equation} \label{defAupcr}
\text{AUPCR}(M) = \frac{\sum_{i=1}^{|\mathcal{P}|} (|\mathcal{P}|-i+1) n_i(M)}{|A||\mathcal{P}|}
\end{equation}
One can visualize this quantity by considering Figure \ref{AUPCR}. For an instance with $|A|=8$, $|\mathcal{P}|=6$ and signature of matching $M$ given by $(4,0,2,1,1,0)$, the area under the shaded region corresponds to AUPC($M$) ($= 4+4+6+7+8+8 = 37$) while the area of bounding rectangle corresponds to TA($M$)($=8 \times 6=48$). With these computed, AUPCR($M$) is essentially the ratio of the two and is given by $\frac{37}{48} \approx 0.771$.
\begin{figure}[h]
	\centering
	\includegraphics[scale=0.35]{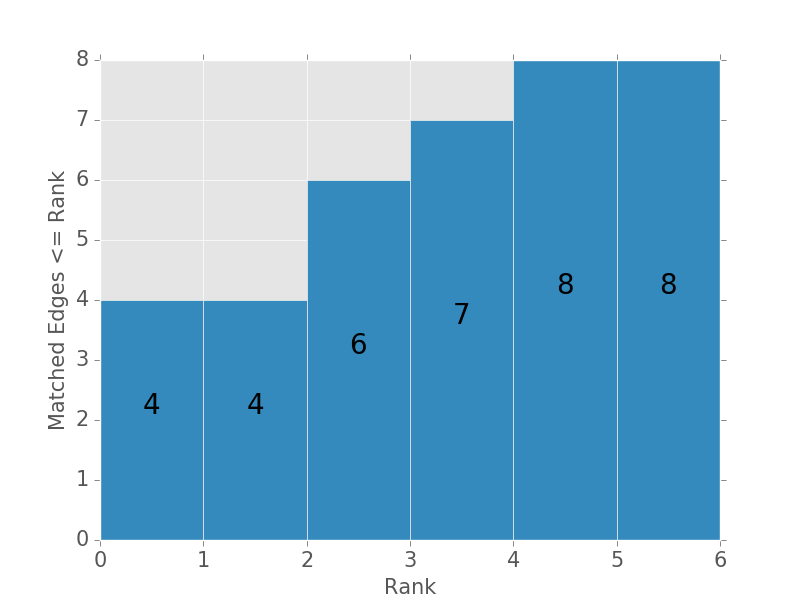}
    \caption{AUPCR - Visualization}
    \label{AUPCR}
\end{figure}

A matching that maximizes this measure can be vaguely seen as a "softer" version of the rank maximal matching: it does not give up matching low ranked edges entirely in order to match a large number of high ranked edges. Based on this we consider two problems:
\begin{itemize}
\item AUPCR Maximizing Matching - the problem of finding a matching which maximizes the AUPCR metric. We denote such a matching as AMM.
\item Max Cardinality AMM - the problem of finding a matching with the maximum cardinality among all matchings with maximum AUPCR. We denote such a matching as MC-AMM.
\end{itemize}

In this paper, we formulate algorithms to address the above defined problems and show that the Max Cardinality AUPCR maximizing matching performs favorably on a variety of other standard metrics typically used to compare matchings in practical settings. 

\subsection{Algorithm - AUPCR Maximizing Matching}
The problem of finding an AUPCR maximizing matching can be reduced to the problem of finding a maximum weighted perfect matching. Given a bipartite graph $G(A\cup \mathcal{P}, E)$ and a weight $w_e$ for each edge $e \in E$, we define the weight of a matching as $w(M) = \sum_{e \in E} w_e$. Then, the maximum weighted perfect matching problem is find a matching $M$ which matches all vertices in $A$ ($M$ is a perfect matching) and maximizes $w(M)$.

Given an bipartite graph $G(A\cup \mathcal{P}, E)$ with edges representing preferences of A, we construct $G'(A' \cup \mathcal{P}', E')$ as follows: \begin{enumerate}
\item $A' = A_1 \cup \mathcal{P}$ and $\mathcal{P}' = \mathcal{P}_1 \cup A_2$ where $A_1, A_2$ are copies of $A$ and $\mathcal{P}_1, \mathcal{P}_2$ are copies of $\mathcal{P}$.
\item For each edge $e \in E$ of rank $i$, add edge between corresponding vertices of $A_1$ and $\mathcal{P}_1$ with weight $|\mathcal{P}| - i+1$. Similarly, add an edge between $A_2$ and $\mathcal{P}_2$ with the same weight.
\item Add edges with weight 0 from vertices in $A_1$ to their copies in $A_2$. Add similar edges between $\mathcal{P}_1$ and $\mathcal{P}_2$. We refer to these edges as identity edges.
\end{enumerate}
\subsubsection{Proof of Correctness} ~\\ 
\begin{claim}
If $M$ is a max weighted perfect matching in $G'$, then $M$ restricted to $A_1 \cup \mathcal{P}_1$ is a AMM in $G$.
\end{claim}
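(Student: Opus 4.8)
The plan is to build a weight-preserving dictionary between matchings of $G$ and perfect matchings of $G'$, and then use the fact that $G'$ contains two parallel copies of the preference structure of $G$ to force the restriction to be optimal. First I would record the basic translation: rewriting $\mathrm{AUPC}$ as in~\eqref{defAupc}, for any matching $N$ of $G$ we have $\mathrm{AUPC}(N)=\sum_{(a,p)\in N}(|\mathcal{P}|-\mathrm{rank}(a,p)+1)$, so $\mathrm{AUPC}(N)$ is exactly the total weight of the preference edges in the construction corresponding to $N$, while identity edges have weight $0$. I would also note that $|A'|=|A_1|+|\mathcal{P}_2|=|A|+|\mathcal{P}|=|\mathcal{P}_1|+|A_2|=|\mathcal{P}'|$, so a perfect matching of $G'$ saturates every vertex, and that the empty matching of $G$ yields (via identity edges only) a perfect matching of $G'$, so the maximum-weight perfect matching is well defined.

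Next I would analyze an arbitrary perfect matching $M'$ of $G'$. Since the only edges of $G'$ are the $A_1$--$\mathcal{P}_1$ preference edges, the $A_2$--$\mathcal{P}_2$ preference edges, the $A_1$--$A_2$ identity edges, and the $\mathcal{P}_1$--$\mathcal{P}_2$ identity edges, each vertex of $A_1$ (resp. $\mathcal{P}_1$) is matched either along a preference edge or along its own identity edge, and likewise for $A_2,\mathcal{P}_2$. Hence the preference edges of $M'$ inside $A_1\times\mathcal{P}_1$ project (via $a_1\mapsto a$, $p_1\mapsto p$) to a well-defined matching $M_1$ of $G$, those inside $A_2\times\mathcal{P}_2$ project to a well-defined matching $M_2$ of $G$, and every remaining edge of $M'$ is an identity edge of weight $0$; by the dictionary this gives $w(M')=\mathrm{AUPC}(M_1)+\mathrm{AUPC}(M_2)$. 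Setting $\mathrm{OPT}=\max_{N}\mathrm{AUPC}(N)$ over matchings $N$ of $G$, we get $w(M')\le 2\,\mathrm{OPT}$ for every perfect matching $M'$ of $G'$.

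For the matching lower bound I would take any $N^\ast$ of $G$ with $\mathrm{AUPC}(N^\ast)=\mathrm{OPT}$ and construct a perfect matching of $G'$ explicitly: include both $(a_1,p_1)$ and $(a_2,p_2)$ for each $(a,p)\in N^\ast$, and include the identity edge $(a_1,a_2)$ for each applicant $a$ unmatched in $N^\ast$ and $(p_1,p_2)$ for each post $p$ unmatched in $N^\ast$. A short check shows this saturates $A'$ and $\mathcal{P}'$, and its weight is $2\,\mathrm{AUPC}(N^\ast)=2\,\mathrm{OPT}$. Therefore a maximum-weight perfect matching $M$ of $G'$ has $w(M)=2\,\mathrm{OPT}$.

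Finally I would combine the two directions. Applying the decomposition above to the optimal $M$ gives $\mathrm{AUPC}(M_1)+\mathrm{AUPC}(M_2)=2\,\mathrm{OPT}$ with $\mathrm{AUPC}(M_1),\mathrm{AUPC}(M_2)\le\mathrm{OPT}$, which forces $\mathrm{AUPC}(M_1)=\mathrm{OPT}$; that is, $M$ restricted to $A_1\cup\mathcal{P}_1$ maximizes $\mathrm{AUPC}$ over all matchings of $G$, and since $\mathrm{AUPCR}=\mathrm{AUPC}/(|A||\mathcal{P}|)$ has a fixed denominator, it is an AMM. The step I expect to need the most care is the bookkeeping in the second paragraph: confirming that the two families of preference edges of a perfect matching really do restrict to genuine matchings of $G$, that no edge weight is lost or double counted, and that the additive splitting ``$2\,\mathrm{OPT}=\mathrm{OPT}+\mathrm{OPT}$'' is airtight. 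Everything else is routine verification of the construction.
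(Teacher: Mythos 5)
Your proof is correct and follows essentially the same route as the paper: decompose a perfect matching of $G'$ into its two preference-edge halves, note the identity edges contribute zero weight, and conclude that optimality of $M$ forces the restriction to $A_1\cup\mathcal{P}_1$ to maximize $\mathrm{AUPC}$. The only difference is cosmetic: where the paper asserts $w(M_1)=w(M_2)$ by a copying/exchange argument, you instead bound $w(M')\le 2\,\mathrm{OPT}$ and exhibit an explicit lift of an optimal matching of $G$ to a perfect matching of $G'$ of weight $2\,\mathrm{OPT}$, which makes the step the paper leaves implicit fully rigorous.
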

\begin{proof}
Let $M_1$ be the matching obtained by restricting $M$ to $A_1 \cup \mathcal{P}_1$ and $M_2$ obtained by restricting $M$ to $A_2 \cup \mathcal{P}_2$. Since $M$ is a perfect matching, all vertices of $G'$ must be matched. So, if a vertex in $A_1 \cup \mathcal{P}_1$ is not matched in $M_1$, it must be matched to its copy in $A_2 \cup \mathcal{P}_2$ via the identity edge. This means that its copy is also unmatched in $M_2$. So, $M_1$ and $M_2$ match the same set of vertices.  Since the identity edges have 0 weight, 
$$w(M) = w(M_1) + w(M_2)$$

Since $M_1$ and $M_2$ match the same set of vertices, one can copy the edges matched in $M_1$ to $A_2 \cup \mathcal{P}_2$. This means that $w(M_1) = w(M_2)$ and $w(M) = 2w(M_1)$. Maximizing $w(M)$ is equivalent to maximizing $w(M_1)$. 

We also have 
\begin{align*}\label{eq1}
  w(M_1) & = \sum_{e \in M_1} w_e = \sum_{e \in M_1} |\mathcal{P}|-r_e+1 \\
  & = \sum_{i=1}^{|\mathcal{P}|} n_i (|\mathcal{P}|-i+1) = |A||\mathcal{P}|\text{AUPCR}(M_1) 
\end{align*}
where $r_e$ is the rank of edge $e$ and $n_i$ is the number of edges of $M_1$ with rank $i$.This means that maximizing $w(M_1)$ maximizes $\text{AUPCR}(M_1)$. 

Hence, if $M$ is a maximum weight perfect matching in $G'$, $M_1$ is a max AUPCR matching in $G$.
\end{proof}

\subsection{Algorithm - Max Cardinality AMM}
The problem of finding a Max Cardinality AMM can also be reduced to an instance of max weighted perfect matching. The reduction is the same as the max AUPCR case, but we add a negative weight of $-\frac{1}{|A|+|\mathcal{P}|}$ to the identity edges going from $A_1$ to $A_2$.
\subsubsection{Proof of Correctness} ~\\ 
\begin{claim}
If $M$ is a max weighted perfect matching in $G'$, then $M$ restricted to $A_1 \cup \mathcal{P}_1$ is a Max Cardinality AMM in $G$.
\end{claim}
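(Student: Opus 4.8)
The plan is to re-use essentially all of the previous proof and only track the contribution of the new negative weights. First I note that the structural part of the earlier argument --- that the restrictions $M_1$ of $M$ to $A_1\cup\mathcal{P}_1$ and $M_2$ of $M$ to $A_2\cup\mathcal{P}_2$ match the same set of applicants and the same set of posts --- used only that $M$ is a perfect matching of $G'$, not the edge weights, so it carries over verbatim. The same ``copying'' exchange argument then still forces $w(M_1)=w(M_2)$ for an optimal $M$: if, say, $w(M_1)>w(M_2)$, replacing the preference edges of $M_2$ by a copy of those of $M_1$ leaves the set of used identity edges untouched (they sit on the unmatched vertices, whose set does not change) and strictly increases the total weight, contradicting optimality.

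Next I would express $w(M)$ through $M_1$. The only difference from the plain AUPCR reduction is that each $A_1$--$A_2$ identity edge now contributes $-\epsilon$ with $\epsilon=\tfrac{1}{|A|+|\mathcal{P}|}$, and exactly the applicants left unmatched by $M_1$ use such an edge, while the $\mathcal{P}_1$--$\mathcal{P}_2$ identity edges still have weight $0$. Combined with the identity $w(M_1)=|A||\mathcal{P}|\,\text{AUPCR}(M_1)$ from the previous proof this gives
\[
w(M)=2w(M_1)-\epsilon\left(|A|-|M_1|\right)=2|A||\mathcal{P}|\,\text{AUPCR}(M_1)-\epsilon\left(|A|-|M_1|\right).
\]
Conversely, every matching $N$ of $G$ extends to a perfect matching of $G'$ of weight $2|A||\mathcal{P}|\,\text{AUPCR}(N)-\epsilon(|A|-|N|)$, by placing $N$ on $A_1$--$\mathcal{P}_1$, its copy on $A_2$--$\mathcal{P}_2$, and identity edges on all unmatched vertices. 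Hence maximizing $w$ over perfect matchings of $G'$ coincides with maximizing $f(N):=2|A||\mathcal{P}|\,\text{AUPCR}(N)-\epsilon(|A|-|N|)$ over all matchings $N$ of $G$, and $M_1$ is such a maximizer.

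It then remains to show that an $f$-maximizer is precisely a Max Cardinality AMM, i.e. that $f$ refines the order ``largest AUPCR first, largest cardinality second''. The quantitative point is that $|A||\mathcal{P}|\,\text{AUPCR}(N)=w(N)$ is an integer, so any two matchings with unequal AUPCR differ by at least $\tfrac{1}{|A||\mathcal{P}|}$ in AUPCR, whereas the cardinality correction $\epsilon(|N|-|N'|)$ always has absolute value less than $1$ since $0\le|N|,|N'|<|A|+|\mathcal{P}|$. Writing $f(N)-f(N')=2|A||\mathcal{P}|(\text{AUPCR}(N)-\text{AUPCR}(N'))+\epsilon(|N|-|N'|)$, if $\text{AUPCR}(N)>\text{AUPCR}(N')$ the first term is at least $2$ and the whole expression exceeds $2-1=1>0$, so an $f$-maximizer has maximum AUPCR; and once we restrict to matchings of maximum AUPCR the first term is a constant, so $f$ reduces to $\epsilon|N|$ up to an additive constant and an $f$-maximizer also attains the maximum cardinality among them. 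That is exactly the definition of MC-AMM.

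I expect the main obstacle to be precisely this last bookkeeping: one must verify that AUPCR values lie on the grid of multiples of $\tfrac{1}{|A||\mathcal{P}|}$ and that the cardinality perturbation, after the scaling by $2|A||\mathcal{P}|$, is strictly smaller than the minimum positive AUPCR gap --- which is exactly what dictates the choice of the constant $\tfrac{1}{|A|+|\mathcal{P}|}$ (any sufficiently small positive weight would work). Everything else is a direct repetition of the previous claim's reasoning.
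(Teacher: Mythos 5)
Your proof is correct and follows essentially the same route as the paper: the same two-copy decomposition with $w(M_1)=w(M_2)$, the same weight identity $w(M)=2w(M_1)-\epsilon(|A|-|M_1|)$, and the same quantitative comparison showing the scaled AUPCR gap (at least $2$, since weights are integers) dominates the cardinality perturbation (absolute value less than $1$). Your packaging via the potential $f(N)$ and the observation that every matching of $G$ extends to a perfect matching of $G'$ of weight $f(N)$ is a slightly cleaner way of organizing the paper's comparison against a fixed maximum-AUPCR matching, but it is not a different argument.
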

\begin{proof}
As before, we can prove that $w(M_1) = w(M_2)$. However, $w(M) = w(M_1) + w(M_2) + w(I)$ where $I$ is the set of identity edges from $A_1$ to $A_2$ in $M$. If $M_1$ leaves $k_A$ vertices in $A$ unmatched and $k_{\mathcal{P}}$ vertices in $\mathcal{P}$ unmatched, then $M_2$ also leaves the same vertices unmatched. So, we have $2(k_A + k_{\mathcal{P}})$ identity edges in $I$ and hence $$w(I) = -2\frac{k_A+k_{\mathcal{P}}}{|A|+|\mathcal{P}|}$$ 
Since $k_A + k_{\mathcal{P}} < |A| + |\mathcal{P}|$, we have $-2 < w(I) \le 0$ and
$$2(w(M_1)-1) < w(M) \le 2w(M_1)$$
Let $M'$ be a Max AUPCR matching extended to $G'$ and $M_1'$ be its restriction to $A_1 \cup \mathcal{P}_1$. Since $w(M') \le w(M)$
\begin{align*}
	& 2(w(M_1')-1) < 2w(M_1) \\
    \Rightarrow & w(M_1') - w(M_1) < 1 \\
    \Rightarrow & |A||\mathcal{P}|\text{AUPCR}(M_1') - |A||\mathcal{P}|\text{AUPCR}(M_1) < 1\\
    \Rightarrow & \text{AUPCR}(M_1') - \text{AUPCR}(M_1) < \frac{1}{|A||\mathcal{P}|}
\end{align*}
From the definition of AUPCR, we can see that if two matchings have different AUPCR, then the difference is $\ge \frac{1}{|A| |\mathcal{P}|}$. So, $M_1'$ and $M_1$ have the same AUPCR, which means that $M_1$ is an AUPCR maximizing matching in $G$.

The cardinality of $M_1$ is $|M_1| = |A|-k_A = |\mathcal{P}|-k_{\mathcal{P}}$. Writing $w(M)$ in terms of $|M_1|$, 
$$w(M) = 2w(M_1) - \frac{|A|+|\mathcal{P}|-2|M_1|}{|A|+|\mathcal{P}|}$$
  All AUPCR maximizing matchings will have the same $w(M_1)$, which means that maximizing $w(M)$ maximizes $|M_1|$. So, $M_1$ is a maximum cardinality AUPCR maximizing matching in $G$.
\end{proof}

The time complexity of the algorithm to find maximum weighted matching presented is $O(m\sqrt{n}\log n)$ \cite{maxweight}. Since both our algorithms construct a graph with $2n$ vertices and $m + n^2$ edges and find a max weighted matching, the time complexity would be $O(n^2\sqrt{n}\log n)$.





\begin{table}[h]
\small
\begin{center}
\begin{tabular}{lll}
\multicolumn{1}{c}{\bf Algorithm}  
&\multicolumn{1}{c}{\bf Running Time}
\\ \hline  \\
\textbf{POM}  \cite{pareto} & $\mathcal{O}(m\sqrt{n})$\\
\textbf{RMM} \cite{irving2004rank} & $\mathcal{O}(min(C\sqrt{n}, n+C)m)$\\
\textbf{FM}  \cite{fair} & ${\mathcal{O}}(Cm\sqrt[]{n}\log n)$\\
\textbf{POPM}  \cite{pop-bounded} & $\mathcal{O}(m\sqrt{n})$\\
\textbf{AMM, MC-AMM} (Our work) & ${\mathcal{O}}(n^2\sqrt{n}\log n)$\\
\end{tabular}
\end{center}
\caption{$C$ is the maximum rank of any edge in the matching, $n=|A|+|P|$ and $m=|E|$}
\label{tab:algorithms}
\end{table}

\section{Experiments}
\subsection{Evaluation Metrics}
The matchings obtained from each algorithm are evaluated with respect to the following metrics.
\begin{itemize}
\item \textbf{Cardinality}: The number of edges present in the matching.
\item \textbf{Unpopularity measure}: The unpopularity measure $u(M, M')$ measures how far away a matching $M$ is from a popular(least unpopular) matching $M'$. Let $p(M_1,M_2)$ be the number of applicants that prefer $M_1$ over $M_2$.  Then $u(M,M')$ for matching $M$ is defined as the ratio of $p(M',M) - p(M,M')$ to the total number of applicants. 
\item \textbf{Rank 1}: The number of matched \textit{rank 1} edges
\item \textbf{AUPCR}: The AUPCR metric is second order stochastic dominance of the profile as defined in Equation \ref{defAupcr}.
\item \textbf{Ranks less than half the preference list size (RHPL)}: This counts the number of applicants who have been matched to a post with a rank better than or equal to half the length of their preference list.
\item \textbf{Average rank}: For a matching $M$, this is the average rank of all matched edges. Although this is similar to the AUPCR metric, the average rank is computed only over the matched edges while AUPCR accounts for unmatched edges.
\item \textbf{Worst rank}: For a matching $M$, this is the highest (worst) rank among all matched edges in $M$.
\item \textbf{Time}: The time taken to find the matching. This is implementation dependent, and the algorithms used have been mentioned earlier along with their time complexities.
\end{itemize}

\subsection{Instances}
For our experiments, we consider two structured instance generators, namely: \textit{Highly Correlated} and \textit{Uniform Random}. These generators are similar in nature to \cite{experimental}, but we consider only instances with strict preference lists. Though all the algorithms described above, except Maximum Cardinality Pareto Optimal, can also handle instances with ties, we went with this choice to have a set of instances upon which all the algorithms could be compared and analyzed. If one thinks about it, this choice is not too restrictive as in practical scenarios preference lists are often strict and devoid of ties.

\subsubsection{Uniform Random (UNI)}
Similar to HC, UNI instances are also parameterized by a density $d$ with $0 \le d \le 1$. Every applicant has a preference list size of $l = \left\lfloor n_{p}\cdot d\right\rfloor$. These preference lists are chosen uniformly at random from the set of permutations of $l$ posts. Let an applicant $a$'s adjacency list be $(p_{1}, p_{2}, ..., p_{l})$. Then $p_{1}$ is ranked 1 by $a$, $p_{2}$ is ranked 2, and so on. Unlike HC, preference list length is identical across all applicants.

\subsubsection{Highly Correlated (HC)}
These instances are generated based on a global preference ordering(say $P$) for the set of posts; one that all the applicants agree upon. A HC instance is parameterized by a density $d$ with $0 \le d \le 1$. For every vertex pair $(u, v)$ with $u \in A, v \in \mathcal{P}$, an edge $(u, v)$ is added with a probability $d$. Once the graph has been constructed, the applicants rank the posts as per the global preference list: the best post, as per P, an applicant is connected to is assigned rank 1, and so on. 

\begin{figure*}[t]
	\centering
    \subfloat[AUPCR]{
 		\includegraphics[scale=0.22]{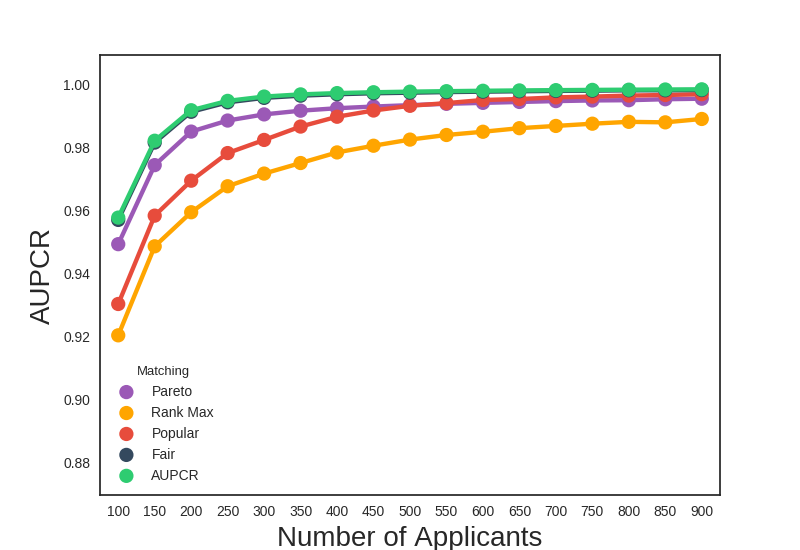}
	}
    \subfloat[Avg Rank]{
 		\includegraphics[scale=0.22]{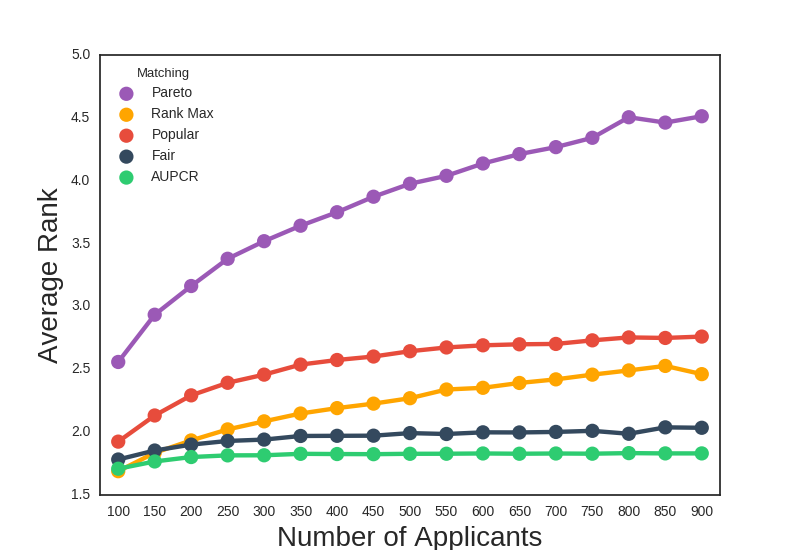}
	}
    \subfloat[Popularity]{
 		\includegraphics[scale=0.22]{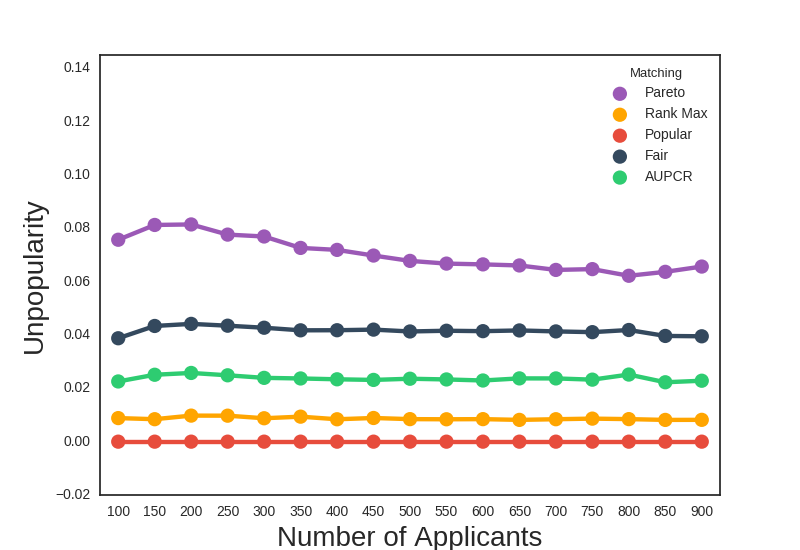}
	}
	\subfloat[Rank 1]{
  		\includegraphics[scale=0.22]{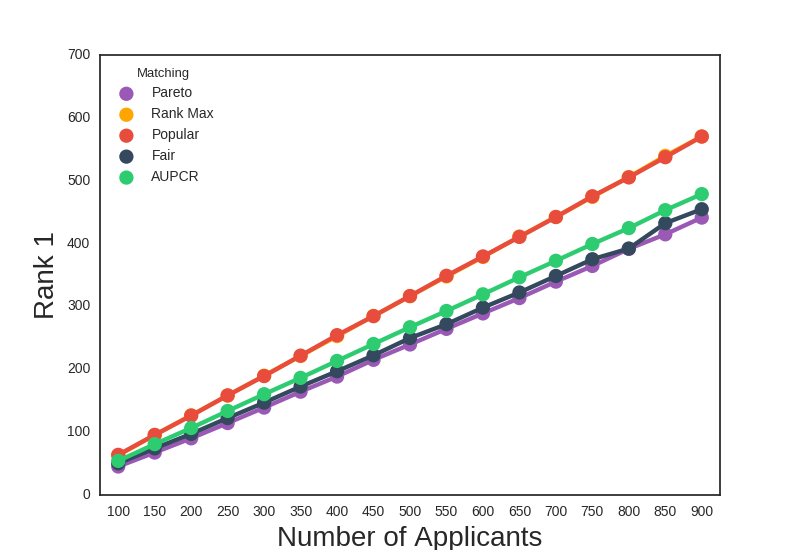}
	}
    \caption{Uniform Random}
\end{figure*}

\begin{figure*}[t]
	\centering
    \subfloat[AUPCR]{
 		\includegraphics[scale=0.22]{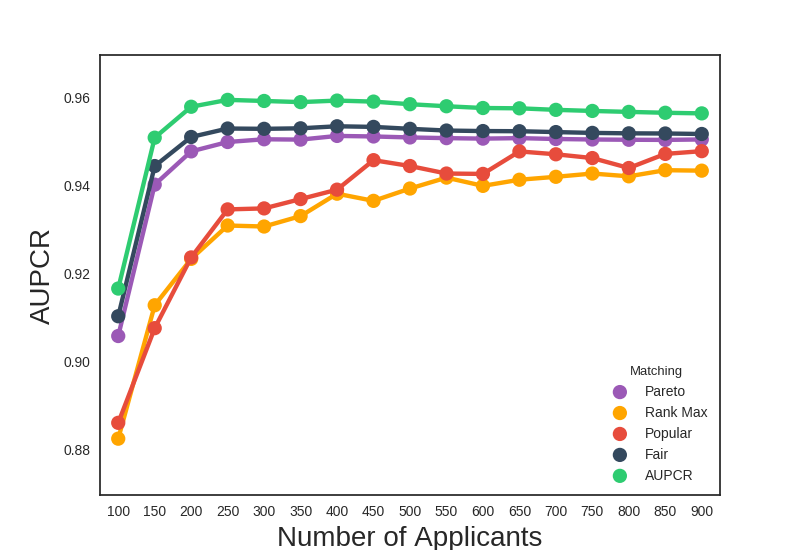}
	}
    \subfloat[Avg Rank]{
 		\includegraphics[scale=0.22]{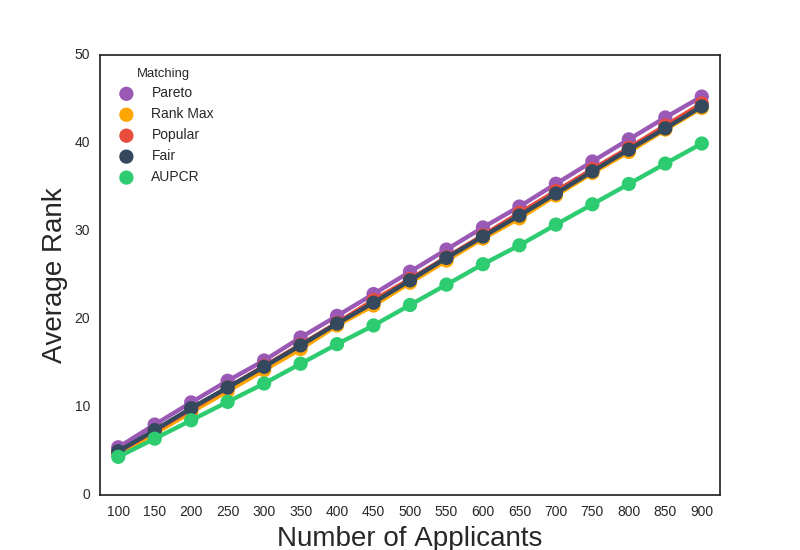}
	}
    \subfloat[Popularity]{
 		\includegraphics[scale=0.22]{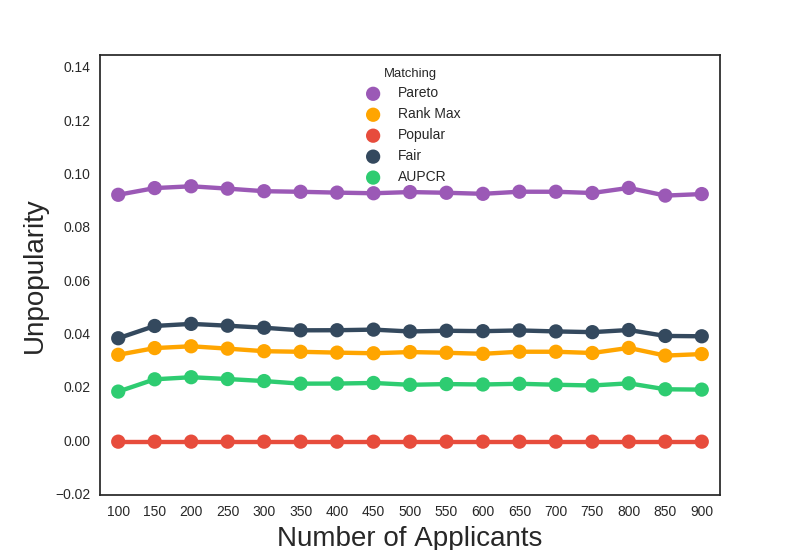}
	}
	\subfloat[Rank 1]{
  		\includegraphics[scale=0.22]{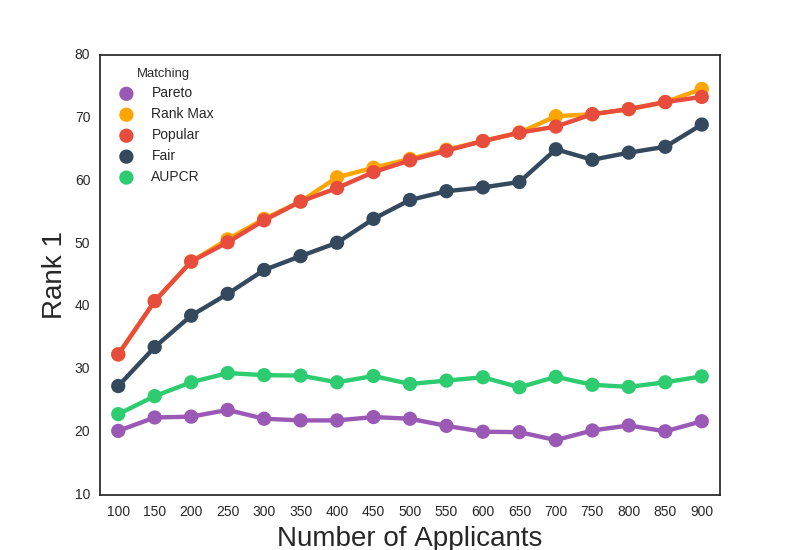}
	}
    \caption{Highly Correlated}
\end{figure*}

\subsection{Experiment Setup}
The number of applicants are equal to the number of posts in any graph and is varied from $50$ to $900$ in steps of $50$. Orthogonally, the density parameter $d$ for HC and UNI is varied from 0.02 to 0.20 in steps 0.02. The reason for this choice of range is that real world datasets are not very dense in nature . Each instance is averaged over 50 random seeds. There exists one more level of averaging across different density($d$) values to get one value for each metric for each problem size(number of applicants).

The variant of Max-AUPCR, that does not not enforce maximum cardinality is used. Surprisingly, this still yields a max-cardinality matching without exception. For POPM, in cases where popular matchings don't exist, the least unpopular matching is utilized. The code was executed using the Amazon web services(AWS) based EC2 service on a t2.micro instance(1 GB Ram, 1 CPU, Intel Xeon processor).
\subsection{Experimental Results}
\subsubsection{Comparing matchings based on rank means} ~\\
\begin{table}[h]
\setlength\tabcolsep{4.2pt}
\begin{center}
\begin{tabular}{llllll}
\multicolumn{1}{c}{}  
&\multicolumn{1}{c}{\bf POM}
&\multicolumn{1}{c}{\bf RMM}
&\multicolumn{1}{c}{\bf POPM}
&\multicolumn{1}{c}{\bf FM}
&\multicolumn{1}{c}{\bf AMM}
\\ \hline  \\
\textbf{Card.} & \textbf{1.00} & 2.94 & 2.00 & \textbf{1.00} & \textbf{1.00} \\
\textbf{Unpop.} & 5.00 & 2.00 & \textbf{1.00} & 3.99 & 3.01 \\
\textbf{Rank 1} & 3.80 & \textbf{1.00} & \textbf{1.00} & 3.18 & 2.00 \\
\textbf{AUPCR} & 3.81 & 4.85 & 3.28 & 1.99 & \textbf{1.00} \\
\textbf{RHPL} & 4.28 & 2.89 & 2.58 & 1.26 & \textbf{1.14} \\
\textbf{Avg Rank} & 4.98 & 2.58 & 3.81 & 2.34 & \textbf{1.22} \\
\textbf{Worst Rank} & 4.66 & 3.27 & 3.39 & \textbf{1.00} & 1.94 \\

\\ \hline \\
\textbf{Rank Mean} & 3.93 & 2.79 & 2.44 & 2.11 & \textbf{1.62} \\
\end{tabular}
\end{center}
\caption{Rank means of algorithms on metrics for UNI}
\label{tab:rank_metrics}
\end{table}

\begin{table}[h]
\setlength\tabcolsep{4.2pt}
\begin{center}
\begin{tabular}{llllll}
\multicolumn{1}{c}{}  
&\multicolumn{1}{c}{\bf POM}
&\multicolumn{1}{c}{\bf RMM}
&\multicolumn{1}{c}{\bf POPM}
&\multicolumn{1}{c}{\bf FM}
&\multicolumn{1}{c}{\bf AMM}
\\ \hline  \\
\textbf{Card.} & \textbf{1.00} & 2.89 & 2.11 & \textbf{1.00} & \textbf{1.00} \\
\textbf{Unpop.} &  5.00 & \textbf{3.07} & \textbf{1.00} & 3.94 & 2.00 \\
\textbf{Rank 1} & 4.93 & \textbf{1.00} & 1.59 & 2.99 & 3.96 \\
\textbf{AUPCR} & 3.25 & 4.84 & 3.92 & 2.00 & \textbf{1.00} \\
\textbf{RHPL} &  4.09 & 1.96 & 3.02 & 4.75 & \textbf{1.11} \\
\textbf{Avg Rank} &  5.00 & 2.18 & 3.53 & 3.20 & \textbf{1.09} \\
\textbf{Worst Rank} & 3.66 & 4.73 & 3.42 & 1.01 & 1.99 \\

\\ \hline \\
\textbf{Rank Mean} & 3.84 & 2.95 & 2.66 & 2.69 & \textbf{1.73} \\
\end{tabular}
\end{center}
\caption{Rank means of algorithms on metrics for HC}
\label{tab:rank_metrics2}
\end{table}
For this analysis, we consider a set of the evaluation metrics which we believe characterizes preference matchings in general. For a given metric and graph instance, we rank the algorithms in terms of performance with the best one getting a rank of 1 and worst one getting a rank of 5. We then average this rank across all instances and this value corresponds to an entry in Table \ref{tab:rank_metrics}. The \textit{rank mean} is computed by taking the average of the entries along the column. This value is intended to serve as a measure of overall performance.

As seen from the table, each chosen metric has a subset of the algorithms performing best. It is however important to note that AMM performs competitively in almost all metrics. This observation is also qualitatively supported from the fact that the \textit{rank mean} attained by AMM is lowest among all algorithms for both UNI and HC instances. This empirically shows that AMM is able to achieve a much desired balance, making it a very compelling choice for many practical preference matching problems.

\subsection{Comparing the Matchings on different metrics}
Some interesting observations for some metrics are as follows :
\begin{itemize}
\item \textbf{Cardinality} : As expected, POM and FM have the largest cardinality since they compute maximum cardinality matchings. However, it was observed that AMM without exception returned a maximum cardinality matching. While this may not universally true(as proved in consequent section) this is a useful property in practice
\item \textbf{RHPL} : The RHPL is one metric that no matching in particular optimizes for. It is peculiar to note that  AUPCR maximizes this metric indicating that it is indeed a more general notion of optimality.
\item \textbf{Rank 1} : It was observed that both popular and rank maximal matchings have similar if not same number of rank 1 edges. While the head of the signature is maximized, it is observed that both these matchings display poor performances on metrics that account for the entirety or the tail of the signature.
\item \textbf{Time} : Dictated by the computational time complexities of the respective algorithms, the times were vastly different for FM and AMM compared to the other three matchings. In graphs with 900 vertices(in each partition), the FM took 512.45 seconds,AMM executed in 204.78 seconds while POP and PM were executed in less than 5 seconds.  
\end{itemize}

\subsection{Understanding AMM}
The strongly positive empirical performance of AMM, in various metrics of importance as shown above, leads us to ask some interesting questions.

\subsubsection{Is an AMM Pareto optimal?} ~ \\
Yes, AMM is a Pareto optimal matching.
\begin{theorem}
AUPCR maximizing matching is Pareto optimal.
\end{theorem}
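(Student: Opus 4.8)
The plan is to argue by contradiction: suppose $M$ is an AUPCR maximizing matching that is not Pareto optimal, and produce a matching $M'$ with strictly larger AUPCR, contradicting the choice of $M$.

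First I would rewrite $\text{AUPC}$ as a sum of per-applicant contributions. Define, for a matching $N$ and applicant $a$, the weight $f_N(a) = |\mathcal{P}| - i + 1$ if $a$ is matched in $N$ to its $i$-th preferred post, and $f_N(a) = 0$ if $a$ is unmatched in $N$. Grouping matched applicants by rank gives exactly $n_i(N)$ applicants of weight $|\mathcal{P}|-i+1$, so Equation (\ref{defAupc}) becomes $\text{AUPC}(N) = \sum_{a \in A} f_N(a)$. Since $\text{TA}$ does not depend on the matching, comparing AUPCR values reduces to comparing these sums.

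Next I would invoke the assumption that $M$ is not Pareto optimal to obtain a matching $M'$ in which no applicant is worse off than in $M$ and at least one applicant $a^{\star}$ is strictly better off. The key observation is that ``$a$ is not worse off in $M'$'' translates directly into $f_{M'}(a) \ge f_M(a)$: if $a$ is unmatched in $M$ then $f_M(a) = 0 \le f_{M'}(a)$; if $a$ is matched in $M$ to rank $i$, then $a$ is matched in $M'$ to some rank $j \le i$, so $f_{M'}(a) = |\mathcal{P}| - j + 1 \ge |\mathcal{P}| - i + 1 = f_M(a)$. The same case analysis applied to $a^{\star}$, who either goes from unmatched to matched or from rank $i$ to rank $j < i$, yields the strict inequality $f_{M'}(a^{\star}) > f_M(a^{\star})$; here one uses that a matched applicant always has weight at least $1$ since its rank is at most $|\mathcal{P}|$. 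Summing over all applicants gives $\text{AUPC}(M') > \text{AUPC}(M)$, hence $\text{AUPCR}(M') > \text{AUPCR}(M)$, the desired contradiction.

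The argument is short, and the only point that needs care --- essentially the only obstacle --- is getting the direction of the weight right (a more preferred post has a smaller rank $i$ and therefore a larger weight $|\mathcal{P}| - i + 1$) and handling the unmatched case correctly, where the contribution is $0$ and any matched applicant contributes a strictly positive amount. Once the per-applicant reformulation of $\text{AUPC}$ is in place, Pareto-domination forces a term-by-term weakly larger sum that is strictly larger in at least one term, so an AUPCR maximizer cannot be Pareto-dominated.
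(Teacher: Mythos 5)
Your proposal is correct and follows essentially the same route as the paper: both decompose $\text{AUPC}$ into per-applicant contributions (the paper encodes your $f_N(a)=0$ for unmatched applicants via the convention $r_M(v)=|\mathcal{P}|+1$) and then compare term by term, with Pareto-domination giving a weakly larger contribution for everyone and a strictly larger one for some applicant. Your write-up is in fact a little more careful than the paper's, which drops the $|A||\mathcal{P}|$ normalization and has a sign slip in the simplified summand; your version avoids both issues.
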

\begin{proof}
Assume to the contrary that an AUPCR maximizing matching $M$ is not Pareto optimal. This means there exists a matching $M'$ where every applicant in $M'$ is at least as well off as in $M$ and at least one applicant in $M'$ is better off than $M$. Consider a vertex $v \in A$. Let $r_{M}(v)$ be the rank of the post that $v$ is matched to ($r_{M}(v) = |\mathcal{P}|+1$ if $v$ is unmatched), and $r_{M'}(v)$ be defined analogously.
\begin{align*}
&AUPCR(M') - AUPCR(M) \\
&= \sum_{v \in A}((\mathcal{P} - r_{M'}(v) + 1) - (\mathcal{P} - r_{M}(v) + 1)) \\
&= \sum_{v \in A}(r_{M'}(v) - r_{M}(v)) \\
&> 0
\end{align*}
The last inequality follows from the fact that every term of the summation is non negative and at least one term is positive by our assumption that $M$ is not Pareto optimal.\newline
Since AUPCR($M'$) - AUPCR($M$) $> 0$, $M$ is not an AUCPR maximizing matching, a contradiction, and so $M$ must be Pareto optimal.
\end{proof}
\subsubsection{Is an AMM always a maximum cardinality matching?} ~ \\
An AMM need not always be a maximum cardinality matching. Consider the instance with $A$  = $\{a_1, a_2, a_3, a_4\}$, $P$  = $\{b_1, b_2, b_3, b_4\}$ and the preferences given by
\begin{align*}
&a_1 : (b_1, 1) \\
&a_2 : (b_1,1), (b_2,2) \\
&a_3 : (b_2,1), (b_1,2), (b_3,3) \\
&a_4 : (b_3,1), (b_1,2), (b_4,3) 
\end{align*}
As shown in Figure \ref{AMM_1} and Figure \ref{MM_1}, for this instance, AMM has a cardinality of 3 while a maximum cardinality matching has a cardinality of 4.
\begin{figure}[h]
	\centering
	\includegraphics[scale=0.35]{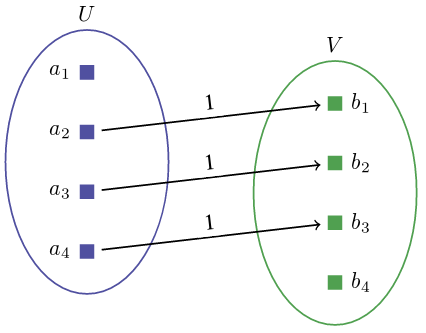}
    \caption{AUPCR Maximizing Matching}
    \label{AMM_1}
\end{figure}
\begin{figure}[h]
	\centering
	\includegraphics[scale=0.35]{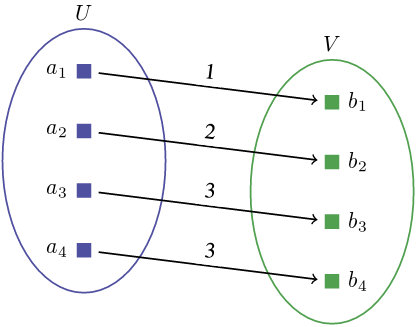}
    \caption{Maximum Cardinality Matching}
    \label{MM_1}
\end{figure}
\subsubsection{Do all AMMs have the same cardinality?}  ~ \\
All AMMs need not have the same cardinality. Consider the instance with $A$  = $\{a_1, ... a_6\}$ and $P$  = $\{b_1, ... b_6\}$  and the preferences given by
\begin{align*}
&a_1 : (b_6, 1), (b_3, 2), (b_1,3) \\
&a_2 : (b_2,1), (b_3,2), (b_1,3) \\
&a_3 : (b_4,1), (b_5,2), (b_2,3) \\
&a_4 : (b_1,1), (b_4,2), (b_6,3) \\
&a_5 : (b_5,1), (b_2,2), (b_1,3) \\
&a_6 : (b_4,1), (b_2,2), (b_5,3)
\end{align*}
As shown in Figure \ref{AMM_2} and Figure \ref{AMM_3}, both are AUPCR maximizing matchings, with an AUPCR of 0.833, but they have different cardinalities. This example also shows that multiple AMMs can exist for a given instance.
\begin{figure}[h]
	\centering
	\includegraphics[scale=0.35]{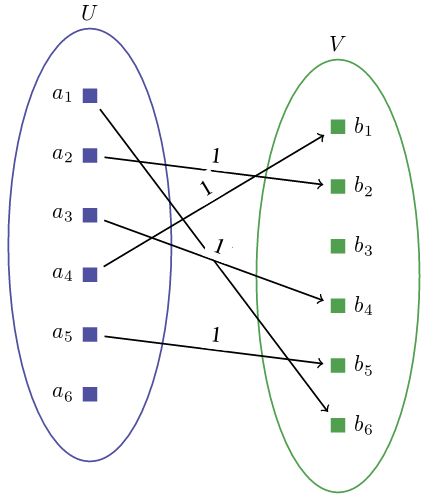}
    \caption{An AMM with $|M|=5$}
    \label{AMM_2}
\end{figure}
\begin{figure}[h]
	\centering
	\includegraphics[scale=0.35]{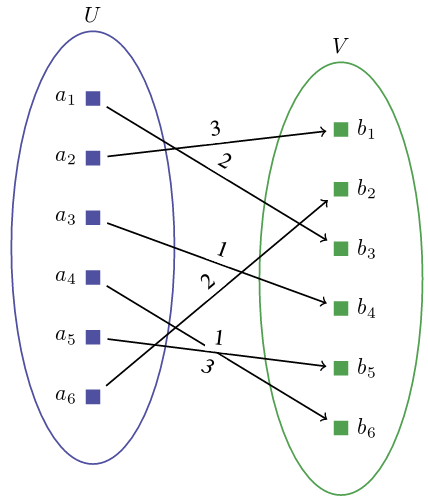}
    \caption{An AMM with $|M|=6$}
    \label{AMM_3}
\end{figure}
\subsubsection{Is an AMM always more "rank maximal" than a FM?} ~
An AMM matching need not be more rank-maximal than the fair matching.  Consider the instance with $A$  = $\{a_1, a_2, a_3, a_4, a_5, a_6, a_7\}$, $P$  = $\{b_1, b_2, b_3, b_4, b_5, b_6, b_7\}$ and the preferences given by
\begin{align*}
&a_1 : (b_1, 1) \\
&a_2 : (b_2,1) \\
&a_3 : (b_3,1), (b_4,2), \\
&a_4 : (b_1,1), (b_5,2), (b_4,3)  \\
&a_5 : (b_1,1), (b_6,2), (b_2,3), (b_5,4) \\
&a_6 : (b_1,1), (b_2,2), (b_7,3), (b_6, 4), (b_3, 5) \\
&a_7 : (b_7,1) 
\end{align*}
An AMM matching for the above graph is as show in Figure \ref{AMM_5} and it's signature is given by $(3,3,0,0,1)$. One can easily see that the FM shown in Figure \ref{AMM_6} is more rank-maximal with a signature $(4, 0, 1, 2, 0)$.

\begin{figure}[h]
	\centering
	\includegraphics[scale=0.35]{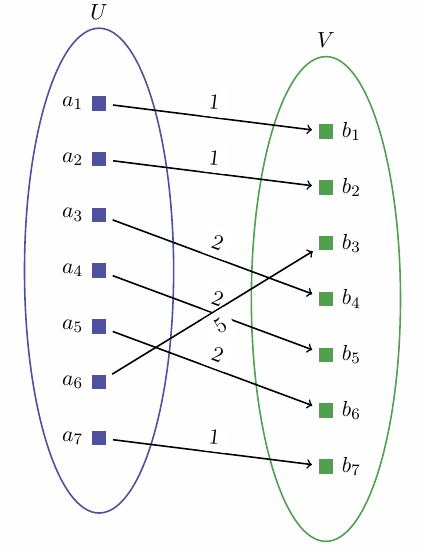}
    \caption{An AMM with matching with signature $(3,3,0,0,1)$}
    \label{AMM_5}
\end{figure}

\begin{figure}[h!]
	\centering
	\includegraphics[scale=0.35]{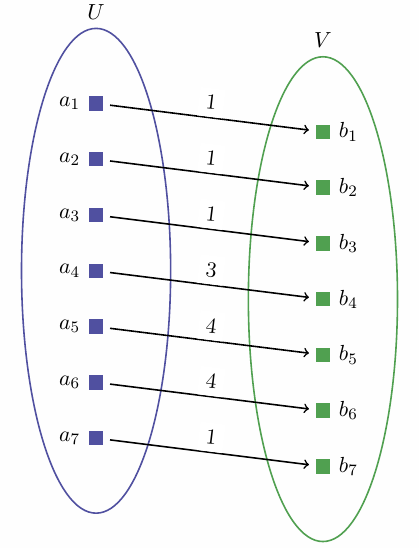}
    \caption{A Fair matching with signature $(4,0,1,2,0)$}
    \label{AMM_6}
\end{figure}


\section{Conclusion}
In this work, we introduce the notion of an AUPCR maximizing matching. We describe two variants with one maximizing the AUPCR, and the other maximizing the cardinality subject to maximizing the AUPCR. We empirically evaluate our algorithm on standard synthetically generated datasets and highlight that AUPCR maximizing matching achieves this much needed middle-ground with respect to the different notions of optimality. The overall performance of the AUPCR matching is superior in comparison to other matchings when all metrics are cumulatively used for comparison. Extending the AUPCR matching and finding algorithms with reduced time complexity is left as future work.

\newpage

\bibliography{main}
\bibliographystyle{aaai}
\end{document}